\newtheorem{theorem}{Theorem}[section]
\newtheorem{lemma}[theorem]{Lemma}
\newcommand{\ben}{\begin{eqnarray}}
\newcommand{\een}{\end{eqnarray}}
\newcommand{\beno}{\begin{eqnarray*}}
\newcommand{\eeno}{\end{eqnarray*}}
\newdimen\eqjot \eqjot = 1\jot
\def\openupeq{\openup \the\eqjot}
\def\addtab#1={#1\;&=}
\def\addtabe#1=#2={#1=#2\;&=}
\begin{document}

\title []
{Pfaffian solution for dark-dark soliton to the coupled complex modified Korteweg-de Vries equation}

\author{Chenxi Li}
\address{Chenxi Li\newline
School of Mathematics and Statistics, Xi'an Jiaotong University, Xi'an, Shaanxi 710049, P.R. China}
\email{lichenxi\_math@stu.xjtu.edu.cn}
\author{Xiaochuan Liu}
\address{Xiaochuan Liu\newline
School of Mathematics and Statistics, Xi'an Jiaotong University, Xi'an, Shaanxi 710049, P.R. China}
\email{liuxiaochuan@mail.xjtu.edu.cn}
\author{Bao-Feng Feng}
\address{Corresponding author: Bao-Feng Feng\newline
 School of Mathematics and Statistical Sciences, The University of Texas Rio Grande Valley, Edinburg, TX 78539, USA}
\email{baofeng.feng@utrgv.edu}
\date{\today}

\begin{abstract}
In this paper, we study coupled complex modified Korteweg-de Vries  (ccmKdV) equation by combining the Hirota's method and the Kadomtsev-Petviashvili (KP) reduction method. First, we show that the bilinear form of the ccmKdV equation under nonzero boundary condition is linked to the discrete BKP hierarchy through Miwa transformation. Based on this finding, we construct the dark-dark soliton solution in the pfaffian form. The dynamical behaviors for one- and two-soliton are analyzed and illustrated.
\end{abstract}

\maketitle \numberwithin{equation}{section}


\small {\it Key words and phrases:}\ dark soliton solution; pfaffian solution; Kadomtsev-Petviashvili (KP) reduction method; Miwa transformation. \\


\section{Introduction}

Higher-order nonlinear Schr\"odinger (HONLS) equation
\begin{equation}\label{gNLS}
	\mathrm{i} q_t +\alpha_1 q_{xx} + \alpha_2 |q|^2 q + \mathrm{i} ( \beta_1 q_{xxx} + \beta_2 |q|^2 q_x +\beta_3 q (|q|^2)_x ) = 0\,,
\end{equation}
was first developed by Kodama and Hasegawa in the nonlinear optics \cite{kodama1987nonlinear}. The parameter $\alpha _{2}>0$ denotes the Kerr effect-induced self-phase modulation(SPM) , while $\alpha _{1}$ governs the group velocity dispersion (GVD): $\alpha _{1}>0$ is focusing case \cite{shabat1972exact}, whereas for $\alpha _{1}<0$ is defocusing case \cite{zakharov1973interaction}. Here, $\beta _{1}$, $\beta _{2}$, and $\beta _{3}$ determine three fundamental higher-order effects: third-order dispersion, self-steepening, and stimulated Raman scattering, respectively \cite{trippenbach1998effects}.
In some special cases, Eq.\,(\ref{gNLS}) becomes integrable when specific constraints on parameters $\beta _{i}$ are imposed. For instance, several well-known integrable models have emerged: (\romannumeral1) the Kaup-Newell equation \cite{Kaup_Newell} ($\beta_{1}=0,\beta _{2}:\beta _{3}=1:1$), (\romannumeral2) the Chen-Lee-Liu equation \cite{Chen_1979} ($\beta_{1}=\beta _{3}=0$), (\romannumeral3)
the Hirota equation \cite{Hirota1973Eqn} ($\beta_{1}: \beta _{2}:\beta_{3}=1:6:0$
), (\romannumeral4) the Sasa-Satsuma (SS) equation \cite{sasa1991new} ($\beta_{1}:\beta _{2}:\beta_{3}=1:6:3$).

Due to the polarization of propagation pulses, the coupled models of the HONLS are important in practical applications \cite{agrawal2000nonlinear,menyuk1987nonlinear,wai1991effects}. In other words, a coupled NLS equation with extra dispersion and nonlinear terms is more practical in nonlinear optics. As discussed in \cite{gilson2003sasa}, there are several two-component generalizations of the HONLS equation which are integrable . These integrable models include
\begin{enumerate}
	\item The coupled Hirota equation \cite{tasgal1992soliton}
	\begin{eqnarray}
		&&    u_{1,t}=u_{1,x x x}-3c(|u_1|^{2}+|u_2|^{2}) u_{1,x}-3c u_1 (u^*_1 u_{1,x} +u^*_2 u_{2,x}), \\
		&& u_{2,t}=u_{2,x x x}-3c(|u_1|^{2}+|u_2|^{2}) u_{2,x} -3c u_2 (u^*_1 u_{1,x} +u^*_2 u_{2,x}).
	\end{eqnarray}
	\item 
	The coupled Sasa-Satsuma (CSS) equation \cite{porsezian1994coupled}
	\begin{eqnarray}
		&&u_{1,t}=u_{1,x x x}-3c(|u_1|^{2}+|u_2|^{2}) u_{1,x}-3c u_1 (|u_1|^{2}+|u_2|^{2})_x, \\
		&&u_{2,t}=u_{2,x x x}-3c(|u_1|^{2}+|u_2|^{2}) u_{2,x}-3c u_2 (|u_1|^{2}+|u_2|^{2})_x. 
	\end{eqnarray}
	\item The coupled complex modified Korteweg-de Vries (ccmKdV) equation \cite{sakovich2000symmetrically}
	\begin{eqnarray}
		&& u_{1,t}=u_{1,x x x}-3c(|u_1|^{2}+|u_2|^{2}) u_{1,x},    \label{ccmkdv1}\\
		&& u_{2,t}=u_{2,x x x}-3c(|u_1|^{2}+|u_2|^{2}) u_{2,x}  \label{ccmkdv2}\,.
	\end{eqnarray}
\end{enumerate}

There are much more studies for the coupled Hirota equation than for the CSS and the ccmKdV equations probably due to the reason that it is the simplest equation among three integrable cases (see, for example, Refs. \cite{wang2021analytical,liu2021long,kang2019construction,chen2013rogue,wang2014generalized}). In the original paper by Tasgal et al.\,\cite{tasgal1992soliton}, the bright soliton was obtained by using the inverse scattering transformation.
Bright, dark and bright-dark soliton solutions were constructed by various researchers  \cite{wang2021analytical,park2000higher,bindu2001dark,porsezian1997optical}. Recently, rogue wave solutions of the coupled Hirota equation were developed in \cite{chen2013rogue,chen2014dark,wang2014generalized,wang2014rogue1}. 

The soliton solutions of the CSS equation were  mainly constructed by the Darboux transformation. In \cite{xu2013single}, the authors constructed the single- and double-hump solutions under the zero boundary condition, which were expressed as Wronskian determinants. Bright multi-soliton solutions were derived and the energy transfer mechanism was revealed in \cite{lu2014bright,liu2018vector}. The rational solutions in localized conditions under non-zero boundaries were established in \cite{zhao2014localized}, in which the authors revealed the dark-antidark soliton, W-shape solution, Mexican
hat, and anti-Mexican hat solutions. In \cite{zhang2017binary}, both bright and dark solitons were studied, with single and bright hump solutions found on a vanishing background and an anti-dark soliton solution found on a non-vanishing background. In \cite{liu2018dark}, the dark-bright soliton and semirational rogue wave solutions were derived via the Darboux-dressing method.

On the other hand, the Riemann-Hilbert problem to the CSS equation was formulated and studied in \cite{geng2016riemann,liu2023riemann,wu2022riemann,wu2017inverse}, in which the soliton solutions and their asymptotic properties are reported. Most recently, Zhang et al. derived the bilinear form of the CSS equation and constructed various soliton solutions such as the dark,  breather, and rogue wave solutions by the Kadomtsev-Petviashvili (KP) reduction method \cite{zhang2025dark,zhang2025rogue}. Moreover, the bright-bright, dark-dark, and bright-dark soliton solutions are obtained by applying the KP reduction method \cite{shi2025general}. 

In spite of the fact that the ccmKdV equation (\ref{ccmkdv1})--(\ref{ccmkdv2}) is a straightforward generalization of the complex mKdV equation 
\begin{equation}
	u_{t}=u_{x x x}-3 c|u|^{2} u_{x},
\end{equation}
the associated study related to the ccmKdV equation is much less compared with the coupled Hirota and CSS equations. The pfaffian form of the general bright soliton solution to the multi-component mKdV equation was given by Iwao and Hirota \cite{iwao1997soliton}. 
Tsuchida studied the coupled mKdV equation using the inverse scattering transformation \cite{tsuchida1998coupled}. By using Hirota's bilinear method, one- and two-bright soliton and breather solutions \cite{xu2023breather}, as well as the first-order rogue wave solution, were constructed in \cite{chan2017rogue}.

A $4 \times 4$ Lax pair of the ccmKdV equation (\ref{ccmkdv1})--(\ref{ccmkdv2}) was given by Tsuchida \cite{tsuchida1998coupled} as $\Phi_x=U\Phi$, $\Phi_t=V\Phi$ with
\begin{align*}
	U=\mathrm{i} \lambda \begin{pmatrix}
		-I & O  \\ O & I
	\end{pmatrix}
	+
	\begin{pmatrix}
		O &   Q  \\ R & O
	\end{pmatrix},
\end{align*}
\begin{align*}
	V=&\mathrm{i} \lambda^3 \begin{pmatrix}
		-4I & O  \\ O & 4I
	\end{pmatrix}   
	+
	\lambda^2 \begin{pmatrix}
		O & 4Q \\ 4R & O
	\end{pmatrix}
	+
	\mathrm{i} \lambda \begin{pmatrix}
		-2QR & 2Q_x \\ -2R_x & 2RQ
	\end{pmatrix}\\
	&+
	\begin{pmatrix}
		Q_xR-QR_x & -Q_{xx}+2QRQ  \\ -R_{xx}+2RQR & R_xQ-RQ_x
	\end{pmatrix},
\end{align*}
\begin{align*}
	Q=\sqrt{\dfrac{c}{2}} \begin{pmatrix}
		u_1 & u_2  \\ -u^*_2 & u^*_1
	\end{pmatrix},\ \ 
	R=\sqrt{\dfrac{c}{2}} \begin{pmatrix}
		u^*_1 & -u_2  \\ u^*_2 & u_1
	\end{pmatrix},
\end{align*}
where $\lambda$ is the spectral parameter that is independent of time. $u_1$ and $u_2$ are complex-valued functions. $I$  is the $2 \times 2$ identity matrix. $O$ is the $2 \times 2$ zero matrix. It is noticed that system (\ref{ccmkdv1})--(\ref{ccmkdv2}) also admits a $6\times6$ Lax pair 
in \cite{Adamopoulou2019DrinfeldSokolovCA}.

Due to the pfaffian structure of the soliton solution to the ccmKdV equation, its multi-dark soliton solution under nonzero boundary condition remains an unsolved problem, which is the motivation of the present paper.   
The remainder of the paper is organized as follows: In Section 2, the main result for the dark-dark soliton solution in pfaffian is presented by bilinearizing the ccmKdV equation (\ref{ccmkdv1})--(\ref{ccmkdv2}).   The proof of the main result is given in Section 3. The dynamics analysis for one- and two-soliton solutions is constructed in Section 4.

\section{Bilinearization and dark soliton solutions under nonzero boundary condition}


In this section, we will give the bilinear form of the coupled complex mKdV equation (\ref{ccmkdv1})--(\ref{ccmkdv2})
under a nonzero boundary condition. To this end, we assume the following transformations
\begin{equation}\label{coupled HH}
	u_1=\rho_1 \frac{g_1}{f} e^{\mathrm{i}\left(\alpha_1 x -\omega_1 t\right)}, \quad   u_2= \rho_2 \frac{g_2}{f} e^{\mathrm{i}\left(\alpha_2 x -\omega_2 t\right)},
\end{equation}
where $f$ is a real-valued function, $g_1,g_2$ are complex-valued functions, and $\rho_i$ and $\alpha_i$ are real parameters, $\omega_i=\alpha^3_i + 3 c \alpha_i (\rho_1^2+\rho_2^2), i=1,2.$ 

By substituting (\ref{coupled HH}) into (\ref{ccmkdv1}), we obtain
\begin{align}\nonumber
	&f^2( D^3_{x} - D_{t} +3 \mathrm{i} \alpha_1  D^2_{x} -3 \alpha^2_1 D_{x} -3c(\rho_1^2+\rho_2^2)D_{x})\,g_1 \cdot f\\
	&-3 (D_{x} g_1 \cdot f)(D^2_{x} f \cdot f+c (\rho_1^2 |g_1|^2+\rho_2^2 |g_2|^2)-c(\rho_1^2+\rho_2^2) f^2 ) \\\nonumber
	&-3 \mathrm{i} \alpha_1  g_1 f\,( D^2_{x} f \cdot  f+c (\rho_1^2 |g_1|^2+\rho_2^2 |g_2|^2)- c(\rho_1^2+\rho_2^2) f^2 ) =0\,,
\end{align}
where $D$ is the Hirota's bilinear operator \cite{hirota2004direct} defined by
\begin{eqnarray*}\label{doperator}
	D_x^mD_t^nf\cdot g=\left.\left(\frac{\partial}{\partial x}-\frac{\partial}{\partial {x'}}\right)^m\left(\frac{\partial}{\partial t}-\frac{\partial}{\partial{t'}}\right)^n
	[f(x,t)g(x',t')]\right|_{x'=x,t'=t}.
\end{eqnarray*}
If we require
\begin{align}\label{2}
	D^2_{x} f \cdot f+c (\rho_1^2 |g_1|^2+\rho_2^2 |g_2|^2) = c(\rho_1^2+\rho_2^2) f^2,
\end{align}
we obtained
$$
( D_x^3 - D_t + 3 \mathrm{i} \alpha_1 D_x^2 - 3 (\alpha_1^2+c(\rho_1^2+\rho_2^2)) D_x) g_1 \cdot f =0.  
$$
Similarly, the bilinear form for (\ref{ccmkdv2}) can also get. Thus, the resulting bilinear equations are
\begin{subequations}\label{bilinear-Hirota}
	\begin{align}
		&( D_x^3-D_t+3\mathrm{i} \alpha_1 D_x^2 - 3 (\alpha_1^2+c(\rho_1^2+\rho_2^2)) D_x)\,g_1 \cdot f =0\,, \tag{2.4a} \\
		&( D_x^3-D_t+3\mathrm{i} \alpha_2 D_x^2 - 3 (\alpha_2^2+c(\rho_1^2+\rho_2^2)) D_x)\,g_2 \cdot f =0\,, \tag{2.4b} \\
		&(D^2_x -c(\rho_1^2+\rho_2^2) )\,f \cdot f  = -c (\rho_1^2 |g_1|^2+\rho_2^2 |g_2|^2)\,. \tag{2.4c}
	\end{align}
\end{subequations}
If we introduce the following transformation, 
\begin{align}\label{3}
	\tilde{x}=x-3c(\rho_1^2+\rho_2^2)t,\ \ \tilde{t}=t, 
\end{align}
the above equations can be rewritten as 
\begin{subequations} \label{bilinear-Hirota2}
	\begin{align}
		&( D_x^3 - D_t + 3   \mathrm{i} \alpha_1 D_x^2 - 3 \alpha_1^2 D_x)\,\tilde{g}_1 \cdot \tilde{f} =0 \,,\tag{2.6a}\label{BHa} \\ 
		&( D_x^3 - D_t + 3  \mathrm{i} \alpha_2 D_x^2 - 3 \alpha_2^2 D_x)\,\tilde{g}_2 \cdot \tilde{f} =0 \,,\tag{2.6b} \label{BHb} \\ 
		& (D^2_x -c(\rho_1^2+\rho_2^2) )\,\tilde{f} \cdot \tilde{f}  = -c (\rho_1^2 |\tilde{g}_1|^2+\rho_2^2 |\tilde{g}_2|^2) \,,  \tag{2.6c} \label{BHc}
	\end{align} 
\end{subequations}
where $\tilde{f}=\tilde{f}(\tilde{x},\tilde{t}), \tilde{g_i}=\tilde{g_i}(\tilde{x},\tilde{t}), i=1,2.$ Therefore, the ccmKdV equation (\ref{ccmkdv1})--(\ref{ccmkdv2}) can be transformed into the bilinear equations \eqref{BHa}--\eqref{BHc} via the transformations (\ref{coupled HH}) and (\ref{3}). 

We convert the solution of system (\ref{ccmkdv1})--(\ref{ccmkdv2}) to the solution of 
\eqref{BHa}--\eqref{BHc}. 
In what follows, we construct solutions satisfying \eqref{BHa}--\eqref{BHc} for $c=1$, and they are expressed in pfaffian.
\begin{theorem}\label{thmmain}
	The dark-dark soliton solution of the  ccmKdV equation (\ref{ccmkdv1})--(\ref{ccmkdv2}) satisfies   
	\begin{equation} \label{dark-dark}
		u_1=\rho_1 \frac{g_1}{f} e^{\mathrm{i}\left(\alpha_1 x -(\alpha^3_1 + 3 c \alpha_1 (\rho_1^2+\rho_2^2) )t\right)}, \ \ u_2= \rho_2 \frac{g_2}{f} e^{\mathrm{i}\left(\alpha_2 x -(\alpha^3_2 + 3 c \alpha_2 (\rho_1^2+\rho_2^2) ) t\right)},
	\end{equation}
	where 
	$$
	f=\tau_{0,0} \left( x-3c(\rho_1^2+\rho_2^2)t, t \right), \ \ g_1=\tau_{1,0}  \left( x-3c(\rho_1^2+\rho_2^2)t, t \right), \ \ g_2=\tau_{0,1}  \left( x-3c(\rho_1^2+\rho_2^2)t, t \right),
	$$
	The pfaffian $\tau_{k_1,k_2}$ given by 
	\begin{equation}\label{tau_fun}
		\tau_{k_1,k_2}=(1,2,\cdots,2N)
	\end{equation} 
	with
	\begin{eqnarray}
		&& (i,j)=\delta_{2N+1-i,j}+\frac{p_i-p_j}{p_i+p_j} (d_0,i)(d_0,j), \ \ i<j\\
		&&(d_0,i)= \prod_{\nu=1}^2 \left(\frac{p_i-\mathrm{i} \alpha_\nu }{p_i+\mathrm{i} \alpha_\nu }\right)^{k_{\nu}}\exp \xi_i, \ \ 
		\xi_i=p_i x+p_i^3 t+\xi_{i 0} ,
	\end{eqnarray}
	the parameters $p_i$ and $\xi_{i 0}$ are complex constants and satisfying the reduction condition
	\begin{eqnarray}\label{Reduction_cond}
		\frac{2{\alpha_1}^2 {\rho_1}^2} {(p^2_i+\alpha^2_1 )({p^*_i}^2+\alpha^2_1)} + \frac{2{\alpha_2}^2 {\rho_2}^2}{(p^2_i+\alpha^2_2)({p^*_i}^2+\alpha^2_2)} =1,
	\end{eqnarray}
	where $ p_{2N+1-i}=p^*_i,$ and  $\xi_{2N+1-i} = \xi^*_{i} + \mathrm{i} \pi/2$ for $1 \leq i \leq N.$
\end{theorem}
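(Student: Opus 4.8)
\emph{Plan of proof.} The argument is an instance of the Kadomtsev--Petviashvili reduction scheme in its BKP (pfaffian) form, using the link between the bilinear system \eqref{BHa}--\eqref{BHc} and the discrete BKP hierarchy. The plan is to proceed in three stages. Stage (a): show that for \emph{arbitrary} complex parameters $p_i,\xi_{i0}$ and any even range $1,\dots,2N$ the pfaffian $\tau_{k_1,k_2}$ of \eqref{tau_fun} satisfies a finite family of bilinear equations in $(x,t,k_1,k_2)$; these are the bilinear equations of the discrete BKP hierarchy (in the two Miwa variables $k_1,k_2$), written through the Miwa transformation in which $k_1,k_2$ enter via the factors $\bigl(\tfrac{p_i-\mathrm i\alpha_\nu}{p_i+\mathrm i\alpha_\nu}\bigr)^{k_\nu}$ and $x,t$ are the first and third continuous flows. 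Stage (b): impose the complex--conjugation reduction $p_{2N+1-i}=p_i^*$, $\xi_{2N+1-i}=\xi_i^*+\mathrm i\pi/2$. Stage (c): impose the algebraic constraint \eqref{Reduction_cond}. Under (b)--(c) the family from (a) collapses to exactly \eqref{BHa}--\eqref{BHc}, and then the substitutions \eqref{coupled HH} and \eqref{3} already justified in Section~2 deliver the dark--dark solution of \eqref{ccmkdv1}--\eqref{ccmkdv2}.

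For stage (a) I would first record how the pfaffian entries respond to the relevant operations. Because the normalisation $\tfrac{p_i-p_j}{p_i+p_j}$ in $(i,j)$ is tuned to cancel against $(p_i+p_j)$, one gets $\partial_x(i,j)=(p_i-p_j)(d_0,i)(d_0,j)$ and $\partial_t(i,j)=(p_i^3-p_j^3)(d_0,i)(d_0,j)$, while the shift $k_\nu\mapsto k_\nu+1$ merely multiplies $(d_0,i)$ by $\tfrac{p_i-\mathrm i\alpha_\nu}{p_i+\mathrm i\alpha_\nu}$. Introducing the extra letters $(d_m,i):=p_i^m(d_0,i)$ and $\phi_i^{(\nu)}:=\tfrac{p_i-\mathrm i\alpha_\nu}{p_i+\mathrm i\alpha_\nu}(d_0,i)$, each of $\partial_x\tau$, $\partial_t\tau$, $\tau_{k_1\pm1,k_2}$, $\tau_{k_1,k_2\pm1}$ becomes a pfaffian over the same index block with one or two special letters adjoined. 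I would then apply the fundamental pfaffian (Pl\"ucker--type) identity
\[
\mathrm{pf}(a_0,a_1,a_2,a_3,\bullet)\,\mathrm{pf}(\bullet)=\mathrm{pf}(a_0,a_1,\bullet)\mathrm{pf}(a_2,a_3,\bullet)-\mathrm{pf}(a_0,a_2,\bullet)\mathrm{pf}(a_1,a_3,\bullet)+\mathrm{pf}(a_0,a_3,\bullet)\mathrm{pf}(a_1,a_2,\bullet),
\]
with $\bullet=(1,2,\dots,2N)$ and $a_0,\dots,a_3$ chosen among $d_0,d_1,d_2,d_3,\phi^{(1)},\phi^{(2)}$, to extract the bilinear equations relating $\tau_{0,0}$ with $\tau_{\pm1,0},\tau_{0,\pm1}$. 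The equation carrying $g_1=\tau_{1,0}$ is produced by the letters encoding $\partial_t$, $\partial_x$ and the first Miwa shift; since $(D_x+\mathrm i\alpha_\nu)^3+\mathrm i\alpha_\nu^3=D_x^3+3\mathrm i\alpha_\nu D_x^2-3\alpha_\nu^2D_x$, it is exactly of the form \eqref{BHa}, and likewise for \eqref{BHb}. The third one is a quadratic identity $D_x^2\,\tau_{0,0}\cdot\tau_{0,0}=c_1\,\tau_{1,0}\tau_{-1,0}+c_2\,\tau_{0,1}\tau_{0,-1}$ with explicit constants $c_1,c_2$.

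For stage (b), with $p_{2N+1-i}=p_i^*$ and $\xi_{2N+1-i}=\xi_i^*+\mathrm i\pi/2$ one has $\exp\xi_{2N+1-i}=\mathrm i\,\overline{\exp\xi_i}$, hence $(d_0,2N+1-i)=\mathrm i\,\overline{(d_0,i)}$ at $k_1=k_2=0$; I would check that this, combined with the anti--diagonal term $\delta_{2N+1-i,j}$, makes the antisymmetric array of $\tau_{0,0}$ invariant (up to an overall factor that drops out of the pfaffian) under the involution ``conjugate all entries and reverse the index order''. This gives $f=\tau_{0,0}\in\mathbb{R}$ and simultaneously identifies $\tau_{-1,0}$ and $\tau_{0,-1}$ with fixed scalar multiples of $\overline{\tau_{1,0}}$ and $\overline{\tau_{0,1}}$, converting the quadratic right--hand side from stage (a) into $\rho_1^2|\tilde g_1|^2+\rho_2^2|\tilde g_2|^2$. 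For stage (c), \eqref{Reduction_cond} is precisely the requirement that one particular linear combination of the ``$k_\nu\!\to\!k_\nu\!-\!1$'' directions acts, after using $p_{2N+1-i}=p_i^*$, on every $(d_0,i)$ as multiplication by the single constant $c(\rho_1^2+\rho_2^2)$; this dimensional reduction turns $c_1\,\tau_{1,0}\tau_{-1,0}+c_2\,\tau_{0,1}\tau_{0,-1}$ into $c(\rho_1^2+\rho_2^2)f^2-c(\rho_1^2|\tilde g_1|^2+\rho_2^2|\tilde g_2|^2)$, which is \eqref{BHc} for $c=1$. Note that \eqref{Reduction_cond} is manifestly invariant under $p_i\mapsto p_i^*$, so it is consistent with the reduction (b).

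The main difficulty I anticipate lies in stage (a) together with the consistency of (b): one must choose the auxiliary letters and the discrete shift so that the pfaffian identity yields \emph{precisely} the operator $D_x^3-D_t+3\mathrm i\alpha_\nu D_x^2-3\alpha_\nu^2D_x$, and not merely a higher--order BKP relation equivalent to it only modulo equations already proved; and one must verify that the $\mathrm i\pi/2$ phase in $\xi_{2N+1-i}$ is the right choice to make $\tau_{0,0}$ real while remaining compatible both with the $\delta_{2N+1-i,j}$ term and with \eqref{Reduction_cond}. The remaining pieces --- the derivative formulas for pfaffian entries, reality (and, if needed, non-vanishing) of $\tau_{0,0}$, and the Pl\"ucker--type pfaffian identities themselves --- are standard and reduce to careful bookkeeping.
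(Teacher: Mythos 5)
Your overall architecture matches the paper's: the two derivative bilinear equations \eqref{BHa}--\eqref{BHb} hold for the unconstrained pfaffian (the paper obtains them from the Hirota--Miwa discrete BKP equation via the Miwa transformation in its Lemma 3.1, and its Remark notes that the direct Pl\"ucker-type pfaffian-identity route you sketch is an acceptable alternative), the conjugation reduction $p_{2N+1-i}=p_i^*$, $\xi_{2N+1-i}=\xi_i^*+\mathrm i\pi/2$ gives $\overline{\tau}_{k_1,k_2}=\tau_{-k_1,-k_2}$ (the paper's Lemma 3.3), and the constraint \eqref{Reduction_cond} is responsible for the quadratic equation \eqref{BHc}.

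The genuine gap is in how you reach \eqref{BHc}. In stage (a) you claim that $D_x^2\,\tau_{0,0}\cdot\tau_{0,0}=c_1\,\tau_{1,0}\tau_{-1,0}+c_2\,\tau_{0,1}\tau_{0,-1}$ holds ``for arbitrary complex parameters $p_i,\xi_{i0}$'' with explicit constants $c_1,c_2$. No such constraint-free identity exists (and it also requires the special choice $c_{ij}=\delta_{2N+1-i,j}$, not arbitrary antisymmetric $c_{ij}$). The paper's Appendix shows that
\begin{align*}
\tau_{k_1+1,k_2}\tau_{k_1-1,k_2}-\tau_{k_1,k_2}^2&=\sum_{j=1}^{N}\frac{4a_1^2\,(p_j^2-p_{2N+1-j}^2)}{(p_j^2-a_1^2)(p_{2N+1-j}^2-a_1^2)}\,Q_j,\\
2\bigl((\partial_x^2\tau_{k_1,k_2})\tau_{k_1,k_2}-(\partial_x\tau_{k_1,k_2})^2\bigr)&=2\sum_{j=1}^{N}(p_j^2-p_{2N+1-j}^2)\,Q_j,
\end{align*}
where $Q_j$ denotes the product of the two sub-pfaffians obtained by deleting the index $j$, respectively $2N+1-j$. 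The coefficients are $j$-dependent, so an equation of the form \eqref{TODA} holds if and only if the coefficients match for \emph{every} $j$, which is precisely \eqref{Reduction_cond}; for $N\geq 2$ and generic $p_i$ there are no constants $c_1,c_2$ making your stage-(a) ``identity'' true. Consequently your stage (c) cannot merely relabel constants; moreover the mechanism you describe there --- that \eqref{Reduction_cond} says a linear combination of the shifts $k_\nu\mapsto k_\nu-1$ acts on each $(d_0,i)$ as multiplication by a single constant --- is not what the condition expresses: the shift multiplies $(d_0,i)$ by $(p_i+\mathrm i\alpha_\nu)/(p_i-\mathrm i\alpha_\nu)$, whereas \eqref{Reduction_cond} couples $p_i$ with $p_{2N+1-i}=p_i^*$ through the products $(p_i^2+\alpha_\nu^2)({p_i^*}^2+\alpha_\nu^2)$, i.e.\ it is a condition on the paired quantities $(d_0,i)(d_0,2N+1-i)$ coming from the coefficient matching above. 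To close the gap you must prove \eqref{TODA} directly under the hypothesis \eqref{Reduction_cond}, as the paper does in Lemma 3.2 and its Appendix, rather than first asserting a constraint-free quadratic identity.
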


\section{Derivation of the dark soliton solution}\label{Derivation of dark soliton Solutions of the coupled Hirota equation}
In this section, we will derive multi-dark soliton solution to the ccmKdV equation (\ref{ccmkdv1})--(\ref{ccmkdv2}) which satisfy three bilinear equations \eqref{BHa}--\eqref{BHc}. We start with a lemma, which can be deduced from the discrete BKP hierarchy. 

\begin{lemma}
	The pfaffian
	\begin{align}\label{tau_fun2}
		\tau_{k_1, k_2}=(1,2,\cdots,2N)\,,
	\end{align}
	with its elements defined by 
	\begin{eqnarray}
		(i,j)=c_{ij}+\frac{p_i-p_j}{p_i+p_j} (d_0,i) (d_0,j),\ \
		(d_0,i)= \prod_{\nu=1}^2 \left(\frac{p_i+a_\nu }{p_i-a_\nu }\right)^{k_{\nu}}\ \exp \xi_i \,,
	\end{eqnarray}
	where $\xi_i=p_i x_1+p_i^3 x_3+\xi_{i 0}$ and $c_{i j}=-c_{j i}$, $p_i$, $\xi_{i 0}$, $a_1$, $a_2$ are constants,
	satisfies the bilinear equations  
	\begin{align}
		&\left(D_{x_1}^{3}- D_{x_3}-3 a_1 D_{x_1}^{2}+3 a_1^{2} D_{x_1}\right) \tau_{k_1+1, k_2} \cdot \tau_{k_1,  k_2} =0\,, \label{BKP-3}\\
		&\left(D_{x_1}^{3}- D_{x_3}-3 a_2 D_{x_1}^{2}+3 a_2^{2} D_{x_1}\right) \tau_{k_1, k_2+1} \cdot \tau_{k_1,  k_2} =0 \,. \label{BKP-4}     
	\end{align}
\end{lemma}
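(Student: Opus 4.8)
The plan is to realize the pfaffian $\tau_{k_1,k_2}$ in \eqref{tau_fun2} as a $\tau$-function of the discrete BKP (Miwa) hierarchy and then to extract \eqref{BKP-3}--\eqref{BKP-4} as continuum limits of discrete BKP bilinear identities. Concretely, I would introduce discrete (Miwa) variables $y_1, y_2$ conjugate to the shift parameters $a_1, a_2$, so that the discrete shift in $y_\nu$ multiplies the potential $(d_0,i)$ by the factor $\frac{p_i+a_\nu}{p_i-a_\nu}$; this is exactly the dependence encoded in the exponents $k_1, k_2$. Thus $\tau_{k_1,k_2}$ is the value of a single BKP $\tau$-function at lattice point $(k_1,k_2)$ in the Miwa variables, with continuous flows $x_1 = x$ and $x_3$ entering through $\xi_i = p_i x_1 + p_i^3 x_3 + \xi_{i0}$. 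The antisymmetric constants $c_{ij}$ and the rational kernel $\frac{p_i-p_j}{p_i+p_j}$ are precisely the structure of a BKP pfaffian $\tau$-function built from the free-fermion (neutral fermion) vacuum expectation, so that the pfaffian form is preserved under all the flows.

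The key algebraic input is the pfaffian identity obeyed by such $\tau$-functions. I would first record the differentiation rules for pfaffian entries: $\partial_{x_1}(d_0,i) = p_i (d_0,i)$ and $\partial_{x_3}(d_0,i) = p_i^3 (d_0,i)$, and the discrete shift $(d_0,i) \mapsto \frac{p_i+a_\nu}{p_i-a_\nu}(d_0,i)$. From these one derives that $\partial_{x_1}, \partial_{x_3}$ and the shifts act on $\tau_{k_1,k_2} = \mathrm{Pf}(a_{ij})$ by the usual rule for derivatives of pfaffians (sum over single-entry derivatives), and the key point is that each such derivative of the entry $(i,j)$ can be written as a combination of $(d_0,i)(d_0,j)$ terms with polynomial coefficients in $p_i, p_j$. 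The heart of the matter is then the Plücker-type (pfaffian) relation among the bordered pfaffians obtained by adjoining rows/columns indexed by auxiliary ``$d$'' symbols: schematically, an identity of the form
\begin{equation*}
	\mathrm{Pf}(d_0, d_1, \bullet)\,\mathrm{Pf}(d_2, d_3, \bullet) - \mathrm{Pf}(d_0, d_2, \bullet)\,\mathrm{Pf}(d_1, d_3, \bullet) + \mathrm{Pf}(d_0, d_3, \bullet)\,\mathrm{Pf}(d_1, d_2, \bullet) = 0,
\end{equation*}
where $\bullet = (1,2,\dots,2N)$, and the $d_j$ are chosen so that each bordered pfaffian equals a specific derivative or shift of $\tau_{k_1,k_2}$. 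Choosing $d_1 = \partial_{x_1} d_0$ (so that $(d_1,i) = p_i(d_0,i)$), and comparing with a shifted pair, one matches the combination $(D_{x_1}^3 - D_{x_3} - 3a_\nu D_{x_1}^2 + 3 a_\nu^2 D_{x_1})\tau_{k_\nu+1,k_\nu'}\cdot \tau_{k_1,k_2}$ to the left side of the pfaffian identity, which vanishes identically. The coefficients $-3a_\nu$ and $3a_\nu^2$ arise from expanding $\frac{p+a_\nu}{p-a_\nu}$ against $p$, i.e. from the discrete-to-continuous dictionary between the Miwa shift direction and the $x_1$ flow.

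The main obstacle I anticipate is bookkeeping: setting up the correct auxiliary index set $\{d_0, d_1, \dots\}$ and verifying that each bordered pfaffian $\mathrm{Pf}(d_i,d_j,1,\dots,2N)$ really does equal the intended differential-difference operator applied to $\tau$, including getting all signs and the $\frac{p_i-p_j}{p_i+p_j}$ weights right when the derivative hits the ``rational kernel'' part of $(i,j)$ rather than only the exponential part. A secondary subtlety is that the entries $(i,j)$ contain the constant piece $c_{ij}$, which is annihilated by $\partial_{x_1}, \partial_{x_3}$ but \emph{not} by the Miwa shift only through the $(d_0,i)(d_0,j)$ factor — so one must check that the $c_{ij}$ part is consistent with the pfaffian expansion (it is, since it corresponds to the fermionic two-point function at zero flow). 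Once the dictionary is fixed, \eqref{BKP-3} follows by taking $\nu = 1$ and \eqref{BKP-4} by taking $\nu = 2$; the two are identical in structure, so only one needs to be carried out in detail. I would present the derivation for \eqref{BKP-3} and remark that \eqref{BKP-4} is obtained by the obvious relabeling.
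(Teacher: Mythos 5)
Your proposal and the paper part ways after the first sentence. The paper's actual proof never touches a Pl\"ucker-type identity directly: it imports the fully discrete BKP (Hirota--Miwa) equation for a four-variable pfaffian $\tau(k_1,k_2,k_3,k_4)$ as a known result of Tsujimoto and Hirota, applies the Miwa transformation $x_1=2\sum_{i=3}^4 k_i a_i^{-1}$, $x_3=\tfrac{2}{3}\sum_{i=3}^4 k_i a_i^{-3},\dots$ to turn the shifts in $k_3,k_4$ into exponentials of Hirota operators, expands via elementary Schur polynomials $p_L(\widetilde D)$, and reads off \eqref{BKP-3} as the coefficient of $a_3^{-1}a_4$; \eqref{BKP-4} follows from the triple $(k_2,k_3,k_4)$. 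What you describe as ``the heart of the matter'' --- bordered pfaffians with auxiliary $d$-characters and a pfaffian identity --- is a genuinely different route, and it is viable: the paper's own Remark notes that exactly these two bilinear equations were proved that way in the reference \cite{feng2015integrable}. The trade-off is that the paper's route needs no new pfaffian computation (everything is outsourced to the discrete BKP equation plus a formal expansion), whereas the direct route is self-contained but requires the full character bookkeeping you flag.

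That said, as a direct proof your sketch has a concrete gap beyond bookkeeping. The three-term relation you write is not the pfaffian identity; the correct identity carries a fourth term, $(d_0,d_1,d_2,d_3,1,\dots,2N)\,(1,\dots,2N)$, which does not vanish here and in fact supplies an essential contribution. More importantly, taking the four characters to be $d_0,d_1,d_2,d_3$ with $(d_n,i)=p_i^n(d_0,i)$ can only produce a purely continuous bilinear equation in $\tau_{k_1,k_2}\cdot\tau_{k_1,k_2}$; to obtain an equation coupling $\tau_{k_1+1,k_2}$ with $\tau_{k_1,k_2}$ you must include the Miwa-shifted character $\bar d_0$ with $(\bar d_0,i)=\frac{p_i+a_1}{p_i-a_1}(d_0,i)$ among the four, establish the values of the mixed entries $(\bar d_0,d_n)$, and use the relation $(p_i-a_1)(\bar d_0,i)=(p_i+a_1)(d_0,i)$ to trade shifted characters for differential ones --- this is precisely where the coefficients $-3a_1$ and $3a_1^2$ come from, not from ``expanding $\frac{p+a_\nu}{p-a_\nu}$ against $p$.'' Until the character set and the four-term identity are fixed, the proposed derivation does not yet close.
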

\begin{proof}
	The discrete BKP equation was proposed by Miwa \cite{miwa1982hirota}.
	The pfaffian solution to the discrete BKP equation was given by Tsujimoto and Hirota \cite{tsujimoto1996pfaffian}.
	Similarly to the discrete KP equation, the discrete BKP equation can be extended to the discrete BKP hierarchy in the sense that the same bilinear equation holds from an arbitrary triple $(k_i,k_j,k_m)$ with discrete parameters $(a_i,a_j,a_m)$. Here we choose a pafffian with four discrete variables
	\begin{equation}
		\tau(k_1,k_2,k_3,k_4)=
		(1,2,\cdots, 2N)\,,
	\end{equation}
	with
	\begin{eqnarray}
		&&    
		(i,j)=c_{ij}+\frac{p_i-p_j}{p_i+p_j} \phi_i(k_1,k_2,k_3,k_4) \phi_j(k_1,k_2,k_3,k_4)\,, \\
		&& c_{i j}=-c_{j i},\ \ \phi_i(k_1,k_2,k_3,k_4)=\prod_{\nu=1}^4 \left(\frac{1-a^{-1}_\nu p_i}{1+a^{-1}_\nu p_i}\right)^{-k_{\nu}}\,.
	\end{eqnarray}
	If we pick up a triple ($k_1,k_3,k_4$)
	then we have 
	\begin{eqnarray}
		\label{dBKP2}\nonumber
		&&(a_{3}-a_{4}) (a_{1}+a_{3}) (a_{1}+a_{4})\tau _{1} {\tau }_{34}
		+(a_{4}-a_{1})(a_{3}+a_{4})(a_{3}+a_{1})\tau _{3}{\tau }_{41} \\
		&&+(a_{1}-a_{3}) (a_{4}+a_{1}) (a_{4}+a_{3}) \tau _{4}{\tau }_{13}+ (a_{1}-a_{3}) (a_{3}-a_{4}) (a_{4}-a_{1}) \tau {\tau }_{134} =0\,.
	\end{eqnarray}
	Here each subscript $i$ denotes a forward shift in the corresponding discrete variable~$n_i$, for example, $\tau_{i} = \tau(k_i+1,k_j,k_m)$, $\tau_{ij} = \tau(k_i+1,k_j+1,k_m)$.
	Notice that 
	\begin{eqnarray*}
		\left( \frac{(1-a^{-1}_{l}p_{i}) (1-a^{-1}_{l}p_{j})}{(1+a^{-1}_{l}p_{i}) (1+a^{-1}_{l}p_{j})}\right) ^{-k_{l}} &=&\exp \left(
		-k_{l}\ln \left( \frac{(1-a^{-1}_{l}p_{i}) (1-a^{-1}_{l}p_{j})}{(1+a^{-1}_{l}p_{i}) (1+a^{-1}_{l}p_{j})}\right) \right) \\
		&=&\exp \left( 2a^{-1}_{l}k_{l}(p_{i}+p_{j})+\frac{2}{3}
		a_{l}^{-3}k_{l} (p_{i}^{3}+p_{j}^{3})+\cdots \right)\,,
	\end{eqnarray*}
	we can define the so-called Miwa transformation
	\begin{equation*}
		x_1=2\sum^4_{i=3}k_i a^{-1}_i,\ x_3=\frac{2}{3}\sum^4_{i=3}k_ia_i^{-3},\ \cdots,\  x_{2\mu-1}=\frac{2}{2\mu-1}\sum^4_{i=3}k_ia_i^{-2\mu+1}.
	\end{equation*}
	Furthermore, we can define the elementary Schur polynomial
	\[
	\exp \left( \sum t^{n}x_{n}\right) =\sum p_{n}(\vec{x})t^{n}\,,
	\]%
	where
	\[
	\vec{x}=(x_{1},x_{2},\cdots ,x_{n})\,,
	\]
	\[
	p_{0}=1,\ p_{1}(x)=x_{1},\ p_{2}=x_{2}+\frac{1}{2}%
	x_{1}^{2},\cdots,  p_{3}=x_{3}+x_{1}x_{2}+\frac{1}{6}x_{1}^{3},
	\]
	\[
	p_{n}= \sum_{k_1+2k_2+\cdots+nk_n=n} \frac{x_1^{k_1}x_2^{k_2} \cdots x_n^{k_n}}{k_1! k_2! \cdots k_n!}\,.
	\]
	Then, the discrete BKP equation \eqref{dBKP2} can be converted into
	\begin{eqnarray*}
		&&((a_{3}-a_{4}) (a_{1}+a_{3}) (a_{1}+a_{4}) \sum a_{3}^{-L}a_{4}^{-M} p_{L}( -\widetilde{D}) p_{M}(-
		\widetilde{D}) \\
		&&+ (a_{4}-a_{1})(a_{3}+a_{4})(a_{3}+a_{1}) \sum a_{3}^{-L}a_{4}^{-M} p_{L}( -\widetilde{D})
		p_{M}( \widetilde{D})  \\
		&&+ (a_{1}-a_{3}) (a_{4}+a_{1}) (a_{4}+a_{3}) \sum a_{3}^{-L}a_{4}^{-M} p_{L}( \widetilde{D}) p_{M}(-\widetilde{D}) \\
		&& + (a_{1}-a_{3}) (a_{3}-a_{4}) (a_{4}-a_{1}) \sum a_{3}^{-L}a_{4}^{-M} p_{L}( \widetilde{D}) p_{M}(\widetilde{D})
		) \tau_{k_1+1,k_2} \cdot \tau_{k_1,k_2}=0\,.
	\end{eqnarray*}
	Here
	$
	\widetilde{D}=(D_{x_1}, 0, \frac 13 D_{x_3}, 0, \cdots)$. 
	At the order of $a^{-1}_3a_4$, we have
	\begin{eqnarray*}
		&& (4p_3(\widetilde{D})  -4 p_1(\widetilde{D}) p_2(\widetilde{D}) +4a_1 p^2_1(\widetilde{D}) -4a^2_1 p_1(\widetilde{D})) \tau_{k_1+1,k_2} \cdot \tau_{k_1,k_2}=0\,,
	\end{eqnarray*}
	which is nothing but the bilinear equation \eqref{BKP-3}. Similarly,  starting from 
	\begin{eqnarray}
		\label{dBKP22}
		&&(a_{3}-a_{4}) (a_{2}+a_{3}) (a_{2}+a_{4})\tau _{2} {\tau }_{34} \nonumber
		+(a_{4}-a_{2})(a_{3}+a_{4})(a_{3}+a_{2})\tau _{3}{\tau }_{42} \\
		&&+(a_{2}-a_{3}) (a_{4}+a_{2}) (a_{4}+a_{3}) \tau _{4}{\tau }_{23}+ (a_{2}-a_{3}) (a_{3}-a_{4}) (a_{4}-a_{2}) \tau {\tau }_{234} =0\,,
	\end{eqnarray} 
	by picking up a triple of ($k_2,k_3,k_4$), we can approve \eqref{BKP-4}. 
\end{proof}

{\bf Remark:} The same bilinear equations (\ref{BHa}) and (\ref{BHb}) were approved by the identities of pfaffian in \cite{feng2015integrable}.
The lemma given below is approved in Appendix by pfaffian identities. 
\begin{lemma}\label{lemma_two}
	If $c_{ij} = \delta_{2N+1-i,j}$, $1 \leq i<j \leq 2N$, then under the condition 
	\begin{eqnarray}
		\frac{2{a_1}^2 {\rho_1}^2} {(p^2_{j}-a^2_1)(p^2_{2N+1-j}-a^2_1)} +\frac{2{a_2}^2 {\rho_2}^2} {(p^2_j-a^2_2)(p^2_{2N+1-j}-a^2_2)} =-1\,,
	\end{eqnarray}
	the pfaffian (\ref{tau_fun2}) satisfies the following bilinear equation
	\begin{align}
		(D^2_x -(\rho_1^2+\rho_2^2)) \tau_{k_1,k_2} \cdot \tau_{k_1,k_2} = -(\rho_1^2 \tau_{k_1+1,k_2} \tau_{k_1-1,k_2}+\rho_2^2 \tau_{k_1,k_2+1} \tau_{k_1,k_2-1})\,. {\label{TODA}}
	\end{align}   
	
\end{lemma}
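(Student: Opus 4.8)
The plan is to verify \eqref{TODA} directly from the explicit pfaffian using Hirota's bilinear calculus together with the Plücker-type algebraic identities for pfaffians, letting the reduction condition enter only at the end to cancel a residual term. First I would record the effect of the relevant operations on $\tau_{k_1,k_2}=(1,\dots,2N)$. Since $(d_0,i)$ depends on $x$ through $\exp\xi_i$ with $\partial_x\xi_i=p_i$, one gets $\partial_x(i,j)=(p_i-p_j)(d_0,i)(d_0,j)$ — the factor $(p_i+p_j)^{-1}$ drops out — while the shift $k_1\mapsto k_1\pm1$ multiplies $(d_0,i)$ by $\bigl(\tfrac{p_i+a_1}{p_i-a_1}\bigr)^{\pm1}$, and likewise $k_2$ with $a_2$. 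To bring $\tau_{k_1,k_2}$, $\partial_x\tau_{k_1,k_2}$, $\partial_x^2\tau_{k_1,k_2}$, $\tau_{k_1\pm1,k_2}$ and $\tau_{k_1,k_2\pm1}$ into one pfaffian framework I would adjoin a small number of virtual (``character'') indices to $\{1,\dots,2N\}$: differential characters $d_m$ with $(d_m,i)=p_i^{\,m}(d_0,i)$ to generate $\partial_x$ and $\partial_x^2$, and characters tied to the two discrete shifts so that $\tau_{k_1\pm1,k_2}$ and $\tau_{k_1,k_2\pm1}$ also become pfaffians over the same augmented list; the brackets among the virtual indices are fixed by consistency with the defining relations. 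This is already the delicate point, because the BKP-type kernel $\tfrac{p_i-p_j}{p_i+p_j}=\int e^{(p_i+p_j)s}\,ds$ has to be carried through the construction.

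Second, with every term expressed as a pfaffian over the common augmented index set, I would apply the fundamental pfaffian identity
\[
\mathrm{pf}(a,b,c,d,\bullet)\,\mathrm{pf}(\bullet)=\mathrm{pf}(a,b,\bullet)\,\mathrm{pf}(c,d,\bullet)-\mathrm{pf}(a,c,\bullet)\,\mathrm{pf}(b,d,\bullet)+\mathrm{pf}(a,d,\bullet)\,\mathrm{pf}(b,c,\bullet),
\]
with $\bullet=(1,\dots,2N)$ and $a,b,c,d$ chosen among the virtual characters, to reduce $D_x^2\,\tau_{k_1,k_2}\cdot\tau_{k_1,k_2}$ to a combination of $\tau_{k_1,k_2}^2$, $\tau_{k_1+1,k_2}\tau_{k_1-1,k_2}$ and $\tau_{k_1,k_2+1}\tau_{k_1,k_2-1}$. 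The hypothesis $c_{ij}=\delta_{2N+1-i,j}$ makes $\mathrm{pf}\bigl((c_{ij})\bigr)=\pm1$ (the pfaffian of the signed anti-diagonal matrix), and it is this that converts several terms of the identity into the clean constant $(\rho_1^2+\rho_2^2)$ and into the shifted products on the right-hand side of \eqref{TODA}.

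The identity will not close exactly; a residual term survives — a sum over the paired indices $j\leftrightarrow 2N+1-j$ of $\tau_{k_1,k_2}^2$ (or a closely related product) times a scalar — and computing that scalar is the crux of the argument. Because the $k_1$-shift multiplies $(d_0,j)$ and $(d_0,2N+1-j)$ by $\tfrac{p_j+a_1}{p_j-a_1}$ and $\tfrac{p_{2N+1-j}+a_1}{p_{2N+1-j}-a_1}$ (similarly with $a_2$), and because $c_{ij}=\delta_{2N+1-i,j}$ couples exactly these two indices, the residual coefficient reduces to
\[
1+\frac{2a_1^2\rho_1^2}{(p_j^2-a_1^2)(p_{2N+1-j}^2-a_1^2)}+\frac{2a_2^2\rho_2^2}{(p_j^2-a_2^2)(p_{2N+1-j}^2-a_2^2)},
\]
which vanishes precisely under the hypothesis of the lemma. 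I expect the main obstacle to be exactly this bookkeeping: choosing the virtual characters so that the $x$-derivative and both discrete shifts live inside a single pfaffian, and then tracking the signs introduced by the pairing $i\leftrightarrow 2N+1-i$ accurately enough to land on this rational coefficient. The $\tfrac{p_i-p_j}{p_i+p_j}$ kernel rules out the naive ``differentiate a Gram determinant'' shortcut, so the pfaffian identities have to be used, and used carefully.
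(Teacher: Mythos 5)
Your outline matches the paper's Appendix proof in all essentials: the derivative rule $\partial_x(i,j)=(p_i-p_j)(d_0,i)(d_0,j)$, the multiplicative action of the discrete shifts on $(d_0,i)$, the augmentation of the index set by characters $d_m$, $\bar d_0$, $\tilde d_0$, and --- decisively --- the residual sum over the pairs $j\leftrightarrow 2N+1-j$ produced by $c_{ij}=\delta_{2N+1-i,j}$, whose $j$-th coefficient $1+\tfrac{2a_1^2\rho_1^2}{(p_j^2-a_1^2)(p_{2N+1-j}^2-a_1^2)}+\tfrac{2a_2^2\rho_2^2}{(p_j^2-a_2^2)(p_{2N+1-j}^2-a_2^2)}$ must vanish term by term, which is exactly the stated reduction condition. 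The only differences are cosmetic: the paper does the bookkeeping by directly expanding the products of pfaffians along the added characters and killing the cross terms via $(d_0,i,1,\dots,2N)=0$, rather than by a single four-term Pl\"ucker identity (which alone could not close the relation, as you anticipate), and the residual multiplies the sub-pfaffian products $(d_0,1,\dots,\hat j,\dots,2N)(d_0,1,\dots,\widehat{2N+1-j},\dots,2N)$ rather than $\tau_{k_1,k_2}^2$ --- consistent with your hedged phrasing.
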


Proof of the Lemma is given in Appendix. We note that the equations in the BKP hierarchy exhibit a structure akin to that presented in (\ref{BHa})-(\ref{BHc}), albeit with supplementary conditions. This insight prompts us to explore the complex conjugate reduction.

\begin{lemma}
	If we choose $a_1= - \mathrm{i} \alpha_1$, $a_2= -\mathrm{i} \alpha_2$ to be purely imaginary numbers,  $c_{ij} = \delta_{2N+1-i,j}$ for $1 \leq i<j \leq 2N$, and $ p_{2N+1-i}=p^*_i,$ $\xi_{2N+1-i} = \xi^*_{i} + \mathrm{i} \pi/2$ for $1 \leq i \leq N $ in pfaffian (\ref{tau_fun2}), it it shown that
	\begin{align}
		\overline{\tau}_{k_1,k_2}=\tau_{-k_1,-k_2}.
	\end{align}
\end{lemma}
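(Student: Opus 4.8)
The plan is to verify the conjugation identity $\overline{\tau}_{k_1,k_2}=\tau_{-k_1,-k_2}$ directly at the level of the pfaffian expansion, by tracking how complex conjugation acts on each entry $(i,j)$ and then exhibiting the result as a relabeling of the pfaffian associated with the shifted indices $(-k_1,-k_2)$. First I would record the action of conjugation on the ingredient $(d_0,i)$. Under the hypotheses $a_\nu=-\mathrm{i}\alpha_\nu$ (so $\overline{a_\nu}=-a_\nu$) and $\xi_{2N+1-i}=\xi_i^*+\mathrm{i}\pi/2$, a short computation gives
\begin{align*}
\overline{(d_0,i)}=\prod_{\nu=1}^2\left(\frac{p_i^*-a_\nu}{p_i^*+a_\nu}\right)^{k_\nu}\exp\xi_i^*
=\mathrm{i}\,\prod_{\nu=1}^2\left(\frac{p_{2N+1-i}+a_\nu}{p_{2N+1-i}-a_\nu}\right)^{-k_\nu}\exp\xi_{2N+1-i},
\end{align*}
using $p_{2N+1-i}=p_i^*$ and the factor $e^{\mathrm{i}\pi/2}=\mathrm{i}$ coming from the shift in $\xi$. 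Thus $\overline{(d_0,i)}=\mathrm{i}\,(d_0,2N+1-i)\big|_{k_\nu\to-k_\nu}$, i.e. conjugation of the $i$-th generator, up to the scalar $\mathrm{i}$, reproduces the $(2N+1-i)$-th generator of the tau function with negated discrete indices.

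Next I would plug this into the entry formula. For $i<j$,
\begin{align*}
\overline{(i,j)}=\delta_{2N+1-i,j}+\frac{p_i^*-p_j^*}{p_i^*+p_j^*}\,\overline{(d_0,i)}\,\overline{(d_0,j)}
=\delta_{2N+1-i,j}+\frac{p_{2N+1-i}-p_{2N+1-j}}{p_{2N+1-i}+p_{2N+1-j}}\,(-1)\,(d_0,2N+1-i)(d_0,2N+1-j)\big|_{k\to-k},
\end{align*}
where the extra $-1$ is the product $\mathrm{i}\cdot\mathrm{i}$ and the argument of the rational prefactor has been rewritten via $p_i^*=p_{2N+1-i}$. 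Setting $i'=2N+1-j$, $j'=2N+1-i$ (so $i'<j'$ when $i<j$) and observing $\delta_{2N+1-i,j}=\delta_{i',2N+1-j'}=\delta_{2N+1-j',i'}$, one sees that $\overline{(i,j)}$ equals $-\,(i',j')_{k\to-k}$ with the pairing term having the correct sign structure; I would then check carefully that the diagonal delta part also matches, since the antidiagonal entries are exactly those with $j=2N+1-i$, which are fixed by the index reversal $i\leftrightarrow 2N+1-j$. The upshot is a relation of the form $\overline{(i,j)}=\varepsilon_{ij}\,(\sigma(i),\sigma(j))_{k\to-k}$ where $\sigma$ is the order-reversing involution $i\mapsto 2N+1-i$ and $\varepsilon_{ij}=\pm1$.

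Finally I would assemble the pfaffian. Since $\mathrm{Pf}$ is multilinear and alternating, conjugating every entry gives $\overline{\tau_{k_1,k_2}}=\overline{\mathrm{Pf}\big((i,j)\big)}=\mathrm{Pf}\big(\overline{(i,j)}\big)$; permuting the $2N$ rows/columns by $\sigma$ contributes $\mathrm{sgn}(\sigma)=(-1)^{N(2N-1)}=(-1)^N$, while the per-entry signs $\varepsilon_{ij}$ together with the overall scaling $\prod_i \mathrm{i}=\mathrm{i}^{2N}=(-1)^N$ must cancel against it — this bookkeeping of signs is the step I expect to be the main obstacle, because one has to be scrupulous about how the $\mathrm{i}$-factors, the $-1$ from $\mathrm{i}^2$ in the pairing term, the antidiagonal $\delta$-contributions, and the Pfaffian's permutation sign combine, and it is easy to be off by a global sign. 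Once these are shown to cancel, we get exactly $\overline{\tau_{k_1,k_2}}=\mathrm{Pf}\big((i,j)_{k\to-k}\big)=\tau_{-k_1,-k_2}$, which is the claim. An alternative, perhaps cleaner route is to absorb the $\mathrm{i}$-factors by noting that rescaling all generators $(d_0,i)\mapsto \lambda(d_0,i)$ leaves the pairing terms scaled by $\lambda^2$ while the $\delta$ part is unchanged; choosing $\lambda$ appropriately on the reversed-index labels shows the two pfaffians agree without tracking individual signs, and I would fall back on this if the direct sign count proves delicate.
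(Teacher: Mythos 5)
Your proposal follows essentially the same route as the paper: conjugate the generators $(d_0,i)$, rewrite each conjugated entry $\overline{(i,j)}$ as the index-reversed entry $(2N+1-j,\,2N+1-i)$ of the pfaffian with $(k_1,k_2)\to(-k_1,-k_2)$, and then identify the relabeled pfaffian with $\tau_{-k_1,-k_2}$ (the $(-1)^N$ from the order-reversing permutation cancelling the $(-1)^N$ from the uniform entry sign, exactly as you anticipate). The only slip is the sign of the exponential factor — $\xi_i^*=\xi_{2N+1-i}-\mathrm{i}\pi/2$ gives $e^{-\mathrm{i}\pi/2}=-\mathrm{i}$ rather than $+\mathrm{i}$ — which is immaterial since only its square $(-1)$ enters the entries.
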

\begin{proof}
	Using the above conditions, we find
	\begin{align*}
		\overline{(d_0,i)} 
		& =-\mathrm{i} \prod_{\nu=1}^2 \left(\frac{p_{2N+1-i}-\mathrm{i} \alpha_\nu }{p_{2N+1-i}+\mathrm{i} \alpha_\nu }\right)^{-k_{\nu}}e^{\xi_{2N+1-i}} 
		= -\mathrm{i} (d_0,2N+1-i)_{-k_1,-k_2}\,, \ \ \ \ \ \ 1 \leq i \leq 2N\\
		\overline{(i,j)}
		&=- \delta_{2N+1-i,j}-\dfrac{p_{2N+1-i}-p_{2N+1-j}}{p_{2N+1-i}+p_{2N+1-j}} (d_0,2N+1-i)_{-k_1,-k_2} (d_0,2N+1-j)_{-k_1,-k_2} \\
		&=-(2N+1-i,2N+1-j)_{-k_1,-k_2}\\
		&=(2N+1-j,2N+1-i)_{-k_1,-k_2} \,,\ \ \ \ 1 \leq i< j \leq 2N
	\end{align*}
	Therefore,
	\begin{align*}
		\overline{\tau}_{k_1,k_2} &=((2N+1-j,2N+1-i)_{-k_1,-k_2})_{1 \leq i< j \leq 2N}=(1,2,\ldots,2N)_{-k_1,-k_2}=\tau_{-k_1,-k_2}.
	\end{align*} 
\end{proof}

Finally, we define 
\begin{align}
	\tilde{f}=\tau_{00}, \quad \tilde{g}_1=\tau_{1,0}, \quad \tilde{g}_2=\tau_{0,1},
\end{align} 
then we have
\begin{align}
	\tilde{f}=\tilde{f}^*, \quad 
	\tilde{{g}}^*_1=\tau_{-1,0}, \quad \tilde{{g}}^*_2=\tau_{0,-1}.
\end{align}

Combining Lemma 3.1-- Lemma 3.3, the pfaffian solutions to the bilinear equations (\ref{BHa})-(\ref{BHc}) are derived.

\section{Dynamics of one- and two-soliton {\color{red}}solutions}

Theorem \ref{thmmain} gives the dark-dark soliton solution (\ref{dark-dark}) of the ccmKdV equation (\ref{ccmkdv1})-(\ref{ccmkdv2}).
By taking $N=1$, we obtain the one-soliton solution in which $f,g_1,g_2$ are
\begin{align*}
	f&=1+\mathrm{i}\; \dfrac{p_1-p^*_1}{p_1+p^*_1} 
	\exp {(\xi_1+\xi^*_1)}\,,\\
	g_1&=1+\mathrm{i}\; \dfrac{p_1-p^*_1}{p_1+p^*_1} \left(\dfrac{p_1-\mathrm{i}\alpha_1}{p_1+\mathrm{i}\alpha_1}\right)\left(\dfrac{p^*_1-\mathrm{i}\alpha_1}{p^*_1+\mathrm{i}\alpha_1}\right)
	\exp {(\xi_1+\xi^*_1)}\,,\\
	g_2&=1+\mathrm{i}\; \dfrac{p_1-p^*_1}{p_1+p^*_1} \left(\dfrac{p_1-\mathrm{i}\alpha_2}{p_1+\mathrm{i}\alpha_2}\right)\left(\dfrac{p^*_1-\mathrm{i}\alpha_2}{p^*_1+\mathrm{i}\alpha_2}\right)
	\exp {(\xi_1+\xi^*_1)}\,,
\end{align*}
where the $\xi_1=p_1 (x-3(\rho^2_1+\rho
^2_2)t)+p^3_1 t+ \xi_{10}\,,$ and $p_1$ satisfy the reduction relation. The one-soliton solution is expressed as
\begin{align*}\nonumber
	u_1= \rho_1 \exp(\mathrm{i} \theta_1) &\left[ \dfrac{1}{2} \left(1+ \left(\dfrac{p_1-\mathrm{i}\alpha_1}{p_1+\mathrm{i}\alpha_1}\right)\left(\dfrac{p^*_1-\mathrm{i}\alpha_1}{p^*_1+\mathrm{i}\alpha_1}\right) \right) \right.\\
	&\left. - \dfrac{1}{2} \left(1- \left(\dfrac{p_1-\mathrm{i}\alpha_1}{p_1+\mathrm{i}\alpha_1}\right) \left(\dfrac{p^*_1-\mathrm{i}\alpha_1}{p^*_1+\mathrm{i}\alpha_1}\right) \right) \tanh \left(\mathrm{Re}(\xi_1)+\dfrac{1}{2} \mathrm{log} (\mathrm{i} \dfrac{\mathrm{Im}(p_1)}{\mathrm{Re}(p_1)})\right) \right],\\
	u_2= \rho_2 \exp(\mathrm{i} \theta_2) &\left[ \dfrac{1}{2} \left(1+ \left(\dfrac{p_1-\mathrm{i}\alpha_2}{p_1+\mathrm{i}\alpha_2}\right)\left(\dfrac{p^*_1-\mathrm{i}\alpha_2}{p^*_1+\mathrm{i}\alpha_2}\right) \right) \right.\\
	&\left. - \dfrac{1}{2} \left(1- \left(\dfrac{p_1-\mathrm{i}\alpha_2}{p_1+\mathrm{i}\alpha_2}\right) \left(\dfrac{p^*_1-\mathrm{i}\alpha_2}{p^*_1+\mathrm{i}\alpha_2}\right) \right) \tanh \left(\mathrm{Re}(\xi_1)+\dfrac{1}{2} \mathrm{log} (\mathrm{i} \dfrac{\mathrm{Im}(p_1)}{\mathrm{Re}(p_1)})\right) \right]\,,
\end{align*}
where the $\theta_i=\alpha_i x-(\alpha^3_i+3\alpha_i (\rho^2_1+\rho^2_2)) t,$ $i=1,2.$ We can plot the one dark soliton solutions, see Fig. 1. 

\begin{figure}    \label{fig:p1}
	\centering
	\subfigure[]{
		\includegraphics[width=0.45\textwidth]{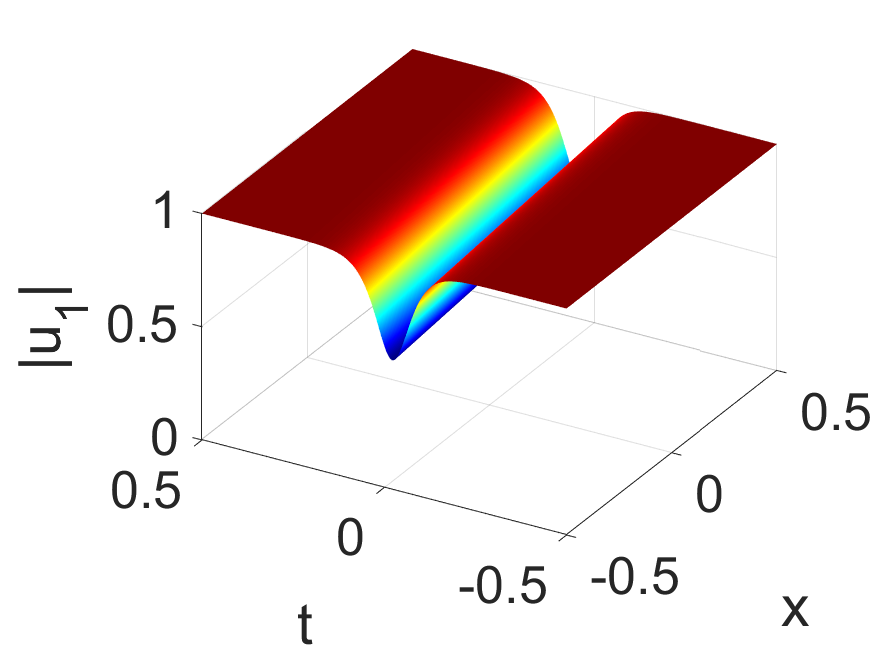}
	}
	\subfigure[]{
		\includegraphics[width=0.45\textwidth]{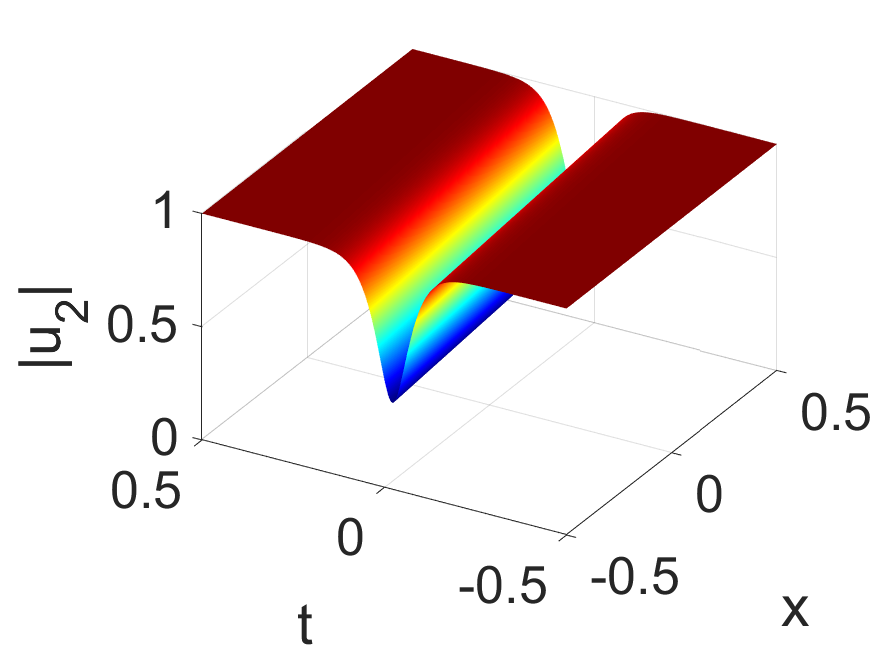}
	}
	\caption{ One dark soliton solutions to the ccmKdV equation (\ref{ccmkdv1})--(\ref{ccmkdv2}) with parameters  $p_1$ = 0.88 + i, $\rho1 = 1,\,\alpha1 = 2$,\,$\rho2 = 1,\, \alpha2 = 1.$ (a) and (b) are  profiles of  $|u_1|$ and $|u_2|$, respectively.} 
\end{figure}
Taking $N=2$, we get the two-soliton solution
\begin{align*}
	f&=1+C_1 
	\exp {(\xi_1+\xi^*_1)}+C_2  
	\exp {(\xi_2+\xi^*_2)}+C_1 C_2 \mathrm{M} \exp {(\xi_1+\xi^*_1+\xi_2+\xi^*_2)}\,,\\
	g_1&=1+C_1 A_1 
	\exp {(\xi_1+\xi^*_1)}+C_2 A_2
	\exp {(\xi_2+\xi^*_2)}+C_1 C_2 A_1 A_2 \mathrm{M} \exp {(\xi_1+\xi^*_1+\xi_2+\xi^*_2)}\,,\\
	g_2&=1+C_1 B_1
	\exp {(\xi_1+\xi^*_1)}+C_2 B_2
	\exp {(\xi_2+\xi^*_2)}+C_1 C_2 B_1 B_2 \mathrm{M} \exp {(\xi_1+\xi^*_1+\xi_2+\xi^*_2)}\,,
\end{align*}
and
\begin{align*}
	C_j=\mathrm{i}\,\dfrac{p_j-p^*_j}{p_j+p^*_j}&,\ \ A_j=\dfrac{(p_j-\mathrm{i}\alpha_1)(p^*_j-\mathrm{i}\alpha_1)}{(p_j+\mathrm{i}\alpha_1)(p^*_j+\mathrm{i}\alpha_1)},\ \ B_j=\dfrac{(p_j-\mathrm{i}\alpha_2)(p^*_j-\mathrm{i}\alpha_2)}{(p_j+\mathrm{i}\alpha_2)(p^*_j+\mathrm{i}\alpha_2)},\\
	&\mathrm{M}=\dfrac{(p_1-p_2)(p^*_1-p^*_2)(p_1-p^*_2)(p^*_1-p_2)}{(p_1+p_2)(p^*_1+p^*_2)(p_1+p^*_2)(p^*_1+p_2)},
\end{align*}
where the $\xi_i=p_i (x-3(\rho^2_1+\rho^2_2)t)+p^3_i t+ \xi_{i0},\, i,j =1,2$ and $p_i$ satisfy the complex relation.
One can observe the illustration of two-soliton solution in Fig. 2. Next, we consider the dynamics behavior of above solution \cite{feng2018general}. The one with $ 
\xi_1+\xi^*_1$ is called soliton 1 and another $ 
\xi_2+\xi^*_2$ is called soliton 2.  Assume soliton 1 is on the right of soliton 2 before the collision. Hence,

\begin{figure}\label{fig:p2}
	\centering
	\subfigure[]{
		\includegraphics[width=0.45\textwidth]{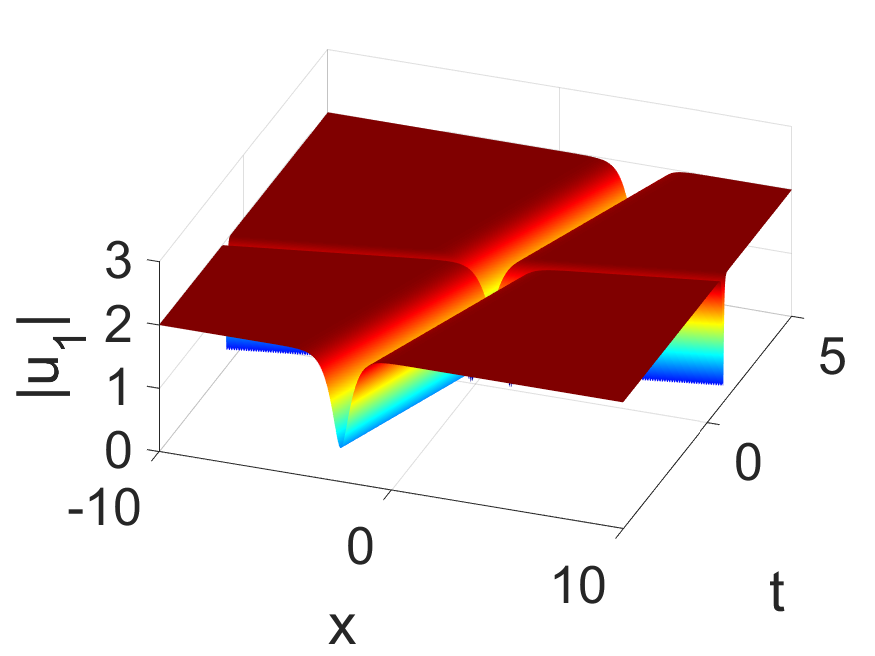}
	}
	\subfigure[]{
		\includegraphics[width=0.45\textwidth]{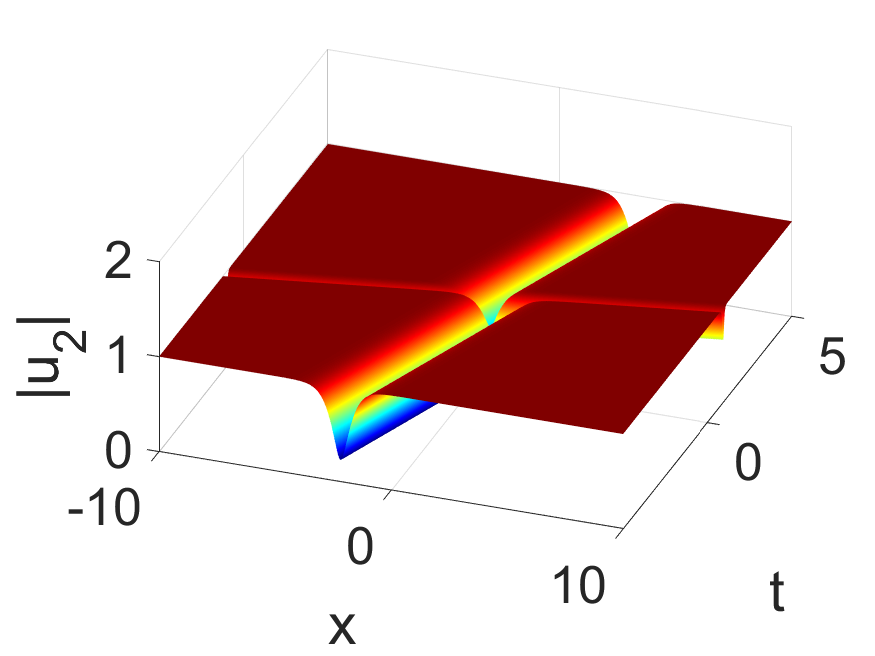}
	}
	\subfigure[]{
		\includegraphics[width=0.45\textwidth]{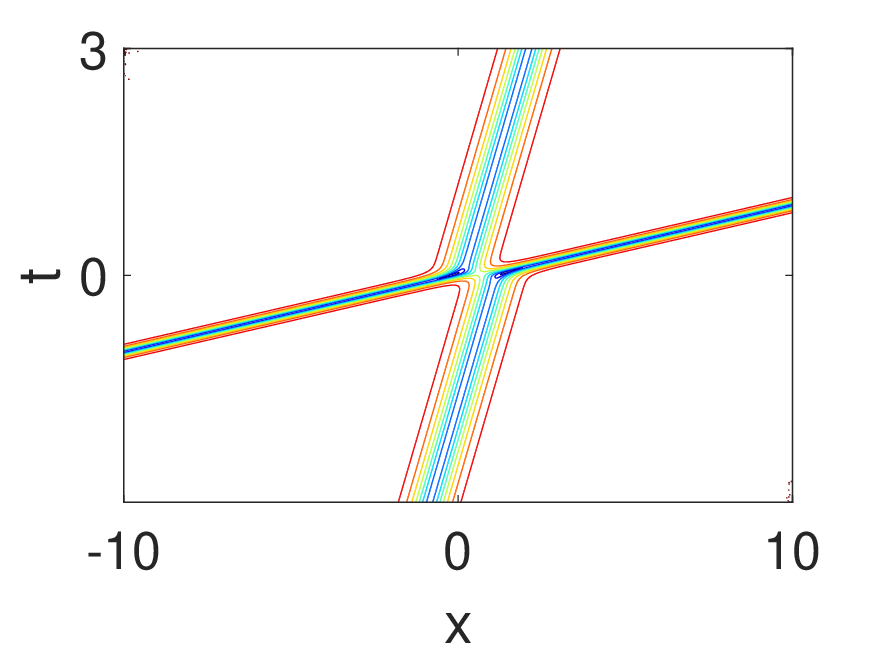}
	} 
	\subfigure[]{
		\includegraphics[width=0.45\textwidth]{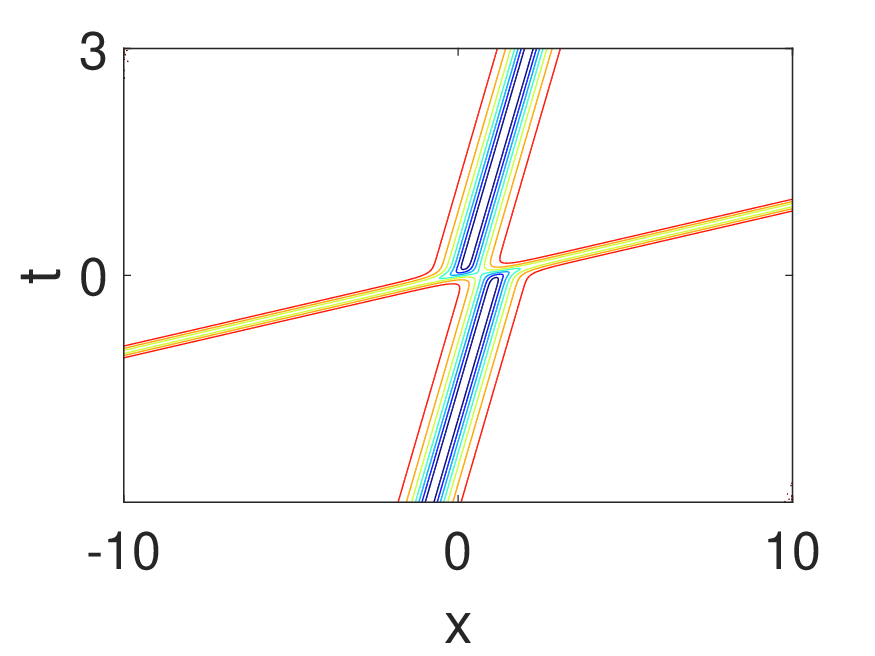}
	}
	\caption{ Two dark soliton solutions to the ccmKdV equation (\ref{ccmkdv1})--(\ref{ccmkdv2}) with parameters $p_1$ = 1.53 + i, $p_2$ = 1.49 + 2i, $\rho_1 = 2,\alpha_1 = 2.3$, $\rho_2 = 1, \alpha_2 = 1.5$, (a) the profile of  $|u_1|$, (b) the profile of $|u_2|$, (c) and (d) are counter plots of $|u_1|$ and $|u_2|$ respectively.}
\end{figure}

(1) Before collision, i.e., $t \rightarrow -\infty$\\
Soliton 1 ($\xi_1+\xi^*_1 \approx 0,\ \ \xi_2+\xi^*_2 \rightarrow -\infty$)
\begin{align*}
	u_1 \rightarrow \rho_1 \exp(\mathrm{i} \theta_1) \left[  \dfrac{1}{2} \left(1+A_1 \right)- \dfrac{1}{2} \left(1-A_1 \right) \tanh \left(\mathrm{Re}(\xi_1)+\dfrac{1}{2} \mathrm{log} \left( \chi_1 \right)\right) \right],\\
	u_2 \rightarrow \rho_2 \exp(\mathrm{i} \theta_2) \left[  \dfrac{1}{2} \left(1+B_1 \right)- \dfrac{1}{2} \left(1-B_1 \right) \tanh \left(\mathrm{Re}(\xi_1)+\dfrac{1}{2} \mathrm{log} \left(\chi_1\right)\right) \right].
\end{align*}
Soliton 2 ($\xi_2+\xi^*_2 \approx 0,\ \ \xi_1+\xi^*_1 \rightarrow +\infty$)
\begin{align*}
	u_1 \rightarrow  \rho_1  A_1 \mathrm{exp} (\mathrm{i} \theta_1) \left[\dfrac{1}{2} \left(1+A_2  \right) - \dfrac{1}{2} \left(1-A_2 \right) \tanh \left(\mathrm{Re}(\xi_2)+\dfrac{1}{2} \mathrm{log} \left( \gamma_1 \right)\right) \right],\\
	u_2 \rightarrow \rho_2  B_1 \mathrm{exp} (\mathrm{i} \theta_2) \left[\dfrac{1}{2}  \left(1+B_2 \right) - \dfrac{1}{2}  \left(1-B_2 \right) \tanh  \left(\mathrm{Re}(\xi_2)+\dfrac{1}{2} \mathrm{log}  \left(\gamma_1 \right)\right) \right].
\end{align*}
where the phases term in Soliton 1 and Soliton 2 was respectively expressed as 
$$\chi_1=\mathrm{i} \dfrac{\mathrm{Im}(p_1)}{\mathrm{Re}(p_1)}, \ \ 
\gamma_1=\mathrm{i} \dfrac{\mathrm{Im}(p_2)}{\mathrm{Re}(p_2)} M . $$
(2) After collision, i.e., $t \rightarrow +\infty$\\
Soliton 1 ($\xi_1+\xi^*_1 \approx 0,\ \ \xi_2+\xi^*_2 \rightarrow +\infty$)
\begin{align*}
	u_1 \rightarrow \rho_1  A_2  \mathrm{exp} (\mathrm{i} \theta_1) \left[\dfrac{1}{2} \left(1+A_1 \right) - \dfrac{1}{2} \left(1-A_1 \right) \tanh \left(\mathrm{Re}(\xi_1)+\dfrac{1}{2} \mathrm{log} \left(\chi_2 \right)\right) \right],\\
	u_2 \rightarrow \rho_2  B_2  \mathrm{exp} (\mathrm{i} \theta_2) \left[\dfrac{1}{2} \left(1+B_1 \right) - \dfrac{1}{2} \left(1-B_1 \right) \tanh \left(\mathrm{Re}(\xi_1)+\dfrac{1}{2} \mathrm{log} \left(\chi_2 \right)\right) \right].
\end{align*}
Soliton 2 ($\xi_2+\xi^*_2 \approx 0,\ \ \xi_1+\xi^*_1 \rightarrow -\infty$)
\begin{align*}
	u_1 \rightarrow \rho_1 \exp(\mathrm{i} \theta_1) \left[  \dfrac{1}{2} \left(1+A_2 \right)- \dfrac{1}{2} \left(1-A_2 \right) \tanh \left(\mathrm{Re}(\xi_2)+\dfrac{1}{2} \mathrm{log} \left( \gamma_2 \right) \right) \right],\\
	u_2 \rightarrow \rho_2 \exp(\mathrm{i} \theta_2) \left[  \dfrac{1}{2} \left(1+B_2 \right)- \dfrac{1}{2} \left(1-B_2\right) \tanh \left(\mathrm{Re}(\xi_2)+\dfrac{1}{2} \mathrm{log} \left(\gamma_2 \right)\right) \right].
\end{align*}
where the phase term in Soliton 1 and Soliton 2 was respectively expressed as 
$$\chi_2=\mathrm{i} \dfrac{\mathrm{Im}(p_1)}{\mathrm{Re}(p_1)} M, \ \  \gamma_2=\mathrm{i} \dfrac{\mathrm{Im}(p_2)}{\mathrm{Re}(p_2)}. $$
In Fig. 2, it can be found that a phase shift occurs after the solitons collide with each other due to the difference in phase terms $\chi_i$ and $\gamma_i$. From the contour plot in Fig. 2, it is observed that the intensity of the dark solitons does not decrease after the collision.

\section{Conclusion}
In this paper, we derived the general dark-dark soliton solutions in the form of pfaffian for the ccmKdV equation (\ref{ccmkdv1})--(\ref{ccmkdv2}) under nonzero boundary condition. This work solved an open problem existed for a long time. The crucial step is to link the two bilinear equations of the ccmKdV equaiton to the disrete BKP hierarchy through Miwa transformation. As future work, we will investigate the soliton solutions to the semi-discrete version of the ccmKdV equation and report the results elsewhere.

\section{Appendix: Proof of the lemma 3.2}
\begin{proof}
	By defining
	\begin{align}
		(d_m,i)=p^{m}_i \prod_{\nu=1}^2 \left(\frac{p_i+a_\nu }{p_i-a_\nu }\right)^{k_{\nu}}\ \exp \xi_i \,,
	\end{align}
	it is shown that the pfaffian elements in (\ref{tau_fun2}) satisfy the linear relations
	\begin{align}
		\dfrac{\partial}{\partial x}(i,j)&=(d_1,i)(d_0,j)-(d_0,i)(d_1,j),\label{proof_re_1}\\
		(i,j)_{k_1+1}-(i,j)&=(d_0,j)_{k_1+1} (d_0,i)-(d_0,j)(d_0,i)_{k_1+1}.\label{proof_re_2}
	\end{align}
	Here we omit without changing terms, for example,
	$$(i,j)_{k_1+1}=(i,j)_{k_1+1,k_2},\ \ (i,j)=(i,j)_{k_1,k_2},\ \ (d_0,i)=(d_0,i)_{k_1,k_2}.$$
	We can also show 
	\begin{eqnarray}
		&&\tau_{k_1+1,k_2}=(\bar{d_0},d_0,1,2,\cdots,2N),\\
		&&\tau_{k_1-1,k_2}=(\tilde{d_0},d_0,1,2,\cdots,2N),
	\end{eqnarray}
	where  $(\bar{d_0},k)=(d_0,k)_{k_1+1}$, $(\tilde{d_0},k)=(d_0,k)_{{k_1}-1}$ and $(\bar{d_0},d_0)=(\tilde{d_0},d_0)=1.$
	Thus, we have
	\begin{align}
		\tau_{k_1+1,k_2}=\tau_{k_1,k_2}+\sum_{j=1}^{2N} (-1)^{j} (d_0,1,\cdots,\hat{j},\cdots,2N)(d_0,j)_{k_1+1}\label{k_1+1}\,,\\
		\tau_{k_1-1,k_2}=\tau_{k_1,k_2}+\sum_{j=1}^{2N} (-1)^{j} (d_0,1,\cdots,\hat{j},\cdots,2N)(d_0,j)_{k_1-1}\,.   \label{k_1-1}
	\end{align}
	Utilizing the (\ref{k_1+1}) and (\ref{k_1-1}), we can arrive at
	\begin{eqnarray*}
		&&\tau_{k_1+1,k_2}\tau_{k_1-1,k_2}-\tau_{k_1,k_2}\tau_{k_1,k_2}\\
		&&= \tau_{k_1,k_2} \sum_{j=1}^{2N}(-1)^{j} (d_0,1,\cdots,\hat{j},\cdots,2N)[(d_0,j)_{k_1+1}+(d_0,j)_{k_1-1}]\\
		&&+\sum_{{1\le i< j\le 2N}}(-1)^{i+j}(d_0,1,\cdots,\hat{i},\cdots,2N)(d_0,1,\cdots,\hat{j},\cdots,2N)\\
		&&\times [(d_0,i)_{k_1+1}(d_0,j)_{k_1-1}+(d_0,i)_{k_1-1}(d_0,j)_{k_1+1}]\\
		&&+\sum_{j=1}^{2N}(d_0,1,\cdots,\hat{j},\cdots,2N)^2 (d_0,j)_{k_1+1}(d_0,j)_{k_1-1}\\   
		&&=\tau_{k_1,k_2} \sum_{j=1}^{2N}(-1)^{j} (d_0,1,\cdots,\hat{j},\cdots,2N)[(d_0,j)_{k_1+1}+(d_0,j)_{k_1-1}]  \\
		&&+\sum_{{1\le i< j\le 2N}}(-1)^{i+j}(d_0,1,\cdots,\hat{i},\cdots,2N)(d_0,1,\cdots,\hat{j},\cdots,2N)\\
		&&\times [(d_0,j)_{k_1+1}(d_0,k)_{k_1-1}+(d_0,j)_{k_1-1}(d_0,k)_{k_1+1}-2(d_0,j)(d_0,k)]\\
		&&+\sum_{j=1}^{2N}[(d_0,1,\cdots,\hat{j},\cdots,2N)(d_0,j)]^2
	\end{eqnarray*}
	Here this identity $(d_0,j)_{k_1+1}(d_0,j)_{k_1-1}=(d_0,j)^2$ is used. Now, the last two items can be simplified. By using
	\begin{eqnarray}\nonumber
		&&(d_0,i)_{k_1+1}(d_0,j)_{k_1-1}+(d_0,i)_{k_1-1}(d_0,j)_{k_1+1}-2(d_0,i)(d_0,j)\\
		&&=\left(\dfrac{p_j+a_1}{p_j-a_1}+\dfrac{p_j-a_1}{p_j+a_1}-\dfrac{p_i+a_1}{p_i-a_1}-\dfrac{p_i-a_1}{p_i+a_1}\right)((i,j)-\delta_{2N+1-i,j}),\ \ i<j.
	\end{eqnarray}
	So, we have
	\begin{eqnarray*}
		&&\tau_{k_1+1,k_2}\tau_{k_1-1,k_2}-\tau_{k_1,k_2}\tau_{k_1,k_2}\\
		&&= \tau_{k_1,k_2} \sum_{j=1}^{2N}(-1)^{j} (d_0,1,\cdots,\hat{j},\cdots,2N)[(d_0,j)_{k_1+1}+(d_0,j)_{k_1-1}]\\
		&&+\sum_{{1\le i< j\le 2N}}(-1)^{i+j}(d_0,1,\cdots,\hat{i},\cdots,2N)(d_0,1,\cdots,\hat{j},\cdots,2N)\\
		&&\times \left(\dfrac{p_j+a_1}{p_j-a_1}+\dfrac{p_j-a_1}{p_j+a_1}-\dfrac{p_i+a_1}{p_i-a_1}-\dfrac{p_i-a_1}{p_i+a_1}\right)((i,j)-\delta_{2N+1-i,j})\\
		&&=\tau_{k_1,k_2} \sum_{j=1}^{2N}(-1)^{j} (d_0,1,\cdots,\hat{j},\cdots,2N)\left(\dfrac{p_j+a_1}{p_j-a_1}+\dfrac{p_j-a_1}{p_j+a_1}\right)(d_0,j)\\
		&&+\sum_{i=1}^{2N}\sum_{j=1}^{2N}(-1)^{i+j-1}(d_0,1,\cdots,\hat{i},\cdots,2N)(d_0,1,\cdots,\hat{j},\cdots,2N) \left(\dfrac{p_j-a_1}{p_j+a_1}+\dfrac{p_j+a_1}{p_j-a_1}\right)(i,j)\\
		&&+\sum_{j=1}^N (d_0,1,\cdots,\hat{j},\cdots,2N)(d_0,1,\cdots,\widehat{2N+1-j},\cdots,2N)\\
		&&\times \left(\dfrac{{p_{2N+1-j}}+a_1}{{p_{2N+1-j}}-a_1}+\dfrac{{p_{2N+1-j}}-a_1}{{p_{2N+1-j}}+a_1}-\dfrac{p_j+a_1}{p_j-a_1}-\dfrac{p_j-a_1}{p_j+a_1}\right)
	\end{eqnarray*}
	and using
	\begin{align}
		\tau_{k_1,k_2} (d_0,i)+\sum_{j=1}^{2N}(-1)^{j-1}(d_0,1,\cdots,\hat{j},\cdots,2N)(i,j)=(d_0,i,1,\cdots,2N)=0,\label{L}
	\end{align} 
	the sum of the first two terms vanished. Therefore, 
	\begin{align}\nonumber
		&\tau_{k_1+1,k_2}\tau_{k_1-1,k_2}-\tau_{k_1,k_2}\tau_{k_1,k_2}\\
		&=\sum_{j=1}^N \dfrac{4a^2_1 (p_{j}^2-{p^2_{2N+1-j}})}{(p_{j}^2-a^2_1)({p^2_{2N+1-j}}-a^2_1)}(d_0,1,\cdots,\hat{j},\cdots,2N)(d_0,1,\cdots,\widehat{2N+1-j},\cdots,2N).
	\end{align}
	Similarly, we have
	\begin{align}\nonumber
		&\tau_{k_1,k_2+1}\tau_{k_1,k_2-1}-\tau_{k_1,k_2}\tau_{k_1,k_2}\\
		&=\sum_{j=1}^N \dfrac{4a^2_2 (p_{j}^2-{p^2_{2N+1-j}})}{(p_{j}^2-a^2_2)({p^2_{2N+1-j}}-a^2_2)}(d_0,1,\cdots,\hat{j},\cdots,2N)(d_0,1,\cdots,\widehat{2N+1-j},\cdots,2N).
	\end{align}
	By applying (\ref{proof_re_1}), we can further derive
	\begin{eqnarray}
		&&\dfrac{\partial}{\partial x} \tau_{k_1,k_2}
		=(d_0,d_1,1,2,\cdots,2N), \ \ \dfrac{\partial^2}{\partial x^2} \tau_{k_1,k_2}=(d_0,d_2,1,2,\cdots,2N), 
	\end{eqnarray}
	where the pfaffian $(d_0,d_1)=(d_0,d_2)=0$. Then
	\begin{flalign}
		\dfrac{\partial}{\partial x} \tau_{k_1,k_2}
		=\sum_{j=1}^{2N}(-1)^{j-1}(d_0,1,\cdots,\hat{j},\cdots,2N)(d_1,j),\label{partial_tau1} \\
		\dfrac{\partial^2}{\partial x^2} \tau_{k_1,k_2}
		=\sum_{j=1}^{2N}(-1)^{j-1}(d_0,1,\cdots,\hat{j},\cdots,2N)(d_2,j).\label{partial_tau2}
	\end{flalign}
	Through formulas (\ref{partial_tau1}) and (\ref{partial_tau2}), we can achieve
	\begin{eqnarray}\nonumber
		&&(\partial_{x}^2 \tau_{k_1,k_2})\tau_{k_1,k_2}-(\partial_x \tau_{k_1,k_2})^2\\\nonumber
		&&=\tau_{k_1,k_2} [ \sum_{j=1}^{2N}(-1)^{j-1}(d_0,1,\cdots,\hat{j},\cdots,2N)(d_2,j) ]-[\sum_{j=1}^{2N}(-1)^{j-1}(d_0,1,\cdots,\hat{j},\cdots,2N)(d_1,j)]^2\\\nonumber
		&&=\tau_{k_1,k_2}\sum_{j=1}^{2N}(-1)^{j-1} p_{j}^2 (d_0,1,\cdots,\hat{j},\cdots,2N)(d_0,j)-[\sum_{j=1}^{2N}(-1)^{j-1}(d_0,1,\cdots,\hat{j},\cdots,2N)(d_1,j)]^2
	\end{eqnarray}
	the first term in the aforementioned equation can be simplified by applying (\ref{L}),
	\begin{eqnarray*}\nonumber
		&&=\sum_{i=1}^{2N}\sum_{j=1}^{2N}(-1)^{i+j-1} p_{i}^2 (d_0,1,\cdots,\hat{i},\cdots,2N)(d_0,1,\cdots,\hat{j},\cdots,2N)(i,j)\\\nonumber
		&&-[\sum_{j=1}^{2N}(-1)^{j-1}(d_0,1,\cdots,\hat{j},\cdots,2N)(d_1,j)]^2 \\
		&&=\sum_{{1\le i< j\le 2N}}(-1)^{i+j-1}(p_{i}^2-p_{j}^2)(d_0,1,\cdots,\hat{i},\cdots,2N)(d_0,1,\cdots,\hat{j},\cdots,2N)(i,j)\\
		&&-[\sum_{j=1}^{2N}(-1)^{j-1}(d_0,1,\cdots,\hat{j},\cdots,2N)(d_1,j)]^2\\
		&&=\sum_{{1\le i< j\le 2N}}(-1)^{i+j-1}(p_{i}^2-p_{j}^2)(d_0,1,\cdots,\hat{i},\cdots,2N)(d_0,1,\cdots,\hat{j},\cdots,2N)\delta_{2N+i-1,j}\\
		&&+\sum_{{1\le i< j\le 2N}}(-1)^{i+j-1}(p_{i}^2-p_{j}^2)(\dfrac{p_i-p_j}{p_i+p_j})(d_0,1,\cdots,\hat{i},\cdots,2N)(d_0,1,\cdots,\hat{j},\cdots,2N)(d_0,i)(d_0,j)\\
		&&-[\sum_{j=1}^{2N}(-1)^{j-1}(d_0,1,\cdots,\hat{j},\cdots,2N)(d_1,j)]^2
	\end{eqnarray*}
	Meanwhile, we can prove that the sum of the last two terms vanishes. 
	\begin{align*}
		&\sum_{{1\le i< j\le 2N}}(-1)^{i+j-1}(p_{i}-p_{j})^2 (d_0,1,\cdots,\hat{i},\cdots,2N)(d_0,1,\cdots,\hat{j},\cdots,2N)(d_0,i)(d_0,j)\\
		&-[\sum_{j=1}^{2N}(-1)^{j-1}(d_0,1,\cdots,\hat{j},\cdots,2N)(d_1,j)]^2\\
		&=\sum_{{1\le i< j\le 2N}}(-1)^{i+j-1}(p_{i}-p_{j})^2 (d_0,1,\cdots,\hat{i},\cdots,2N)(d_0,1,\cdots,\hat{j},\cdots,2N)(d_0,i)(d_0,j)\\
		&-\sum_{j=1}^{2N} p_{j}^2 (d_0,1,\cdots,\hat{j},\cdots,2N)^2 (d_0,j)^2\\
		&+2\sum_{{1\le i< j\le 2N}}(-1)^{i+j-1} p_{i}p_{j} (d_0,1,\cdots,\hat{i},\cdots,2N)(d_0,1,\cdots,\hat{j},\cdots,2N)(d_0,i)(d_0,j)\\
		&=\sum_{{1\le i< j\le 2N}}(-1)^{i+j-1}(p_{i}^2+p_{j}^2) (d_0,1,\cdots,\hat{i},\cdots,2N)(d_0,1,\cdots,\hat{j},\cdots,2N)(d_0,i)(d_0,j)\\
		&-\sum_{j=1}^{2N} p_{j}^2 (d_0,1,\cdots,\hat{j},\cdots,2N)^2 (d_0,j)^2\\
		&=\frac{1}{2} \sum_{i=1}^{2N}\sum_{j=1}^{2N} (-1)^{i+j-1}(d_0,1,\cdots,\hat{i},\cdots,2N)(d_0,1,\cdots,\hat{j},\cdots,2N)(d_0,i)(d_0,j)\\
		&=0
	\end{align*}
	Thus, the left hand of the equation (\ref{TODA}) can write as
	\begin{align}\nonumber
		&2((\partial_{x}^2 \tau_{k_1,k_2})\tau_{k_1,k_2}-(\partial_x \tau_{k_1,k_2})^2)\\\nonumber
		&=2\times \sum_{{1\le i< j\le 2N}}(-1)^{i+j-1}(p_{i}^2-p_{j}^2)(d_0,1,\cdots,\hat{i},\cdots,2N)(d_0,1,\cdots,\hat{j},\cdots,2N)\delta_{2N+i-1,j}\\\nonumber
		&=2\times \sum_{j=1}^{N} (p_{j}^2-{p^2_{2N+1-j}})(d_0,1,\cdots,\hat{j},\cdots,2N)(d_0,1,\cdots,\widehat{2N+j-1},\cdots,2N)
	\end{align}
	Therefore, if relation is written as 
	\begin{eqnarray*}
		\frac{2{a_1}^2 {\rho_1}^2} {(p^2_j-a^2_1)({p^2_{2N+1-j}}-a^2_1)} + \frac{2{a_2}^2 {\rho_2}^2}{(p^2_j-a^2_2)({p^2_{2N+1-j}}-a^2_2)} = -1,
	\end{eqnarray*}
	we obtain the bilinear equation (\ref{TODA}).
\end{proof}

\noindent {\bf Acknowledgements.}
Xiaochuan Liu's research was supported by National Science Foundation of China (Grant No. 12271424).
\vskip 1cm

\end{document}